\documentclass[conference]{IEEEtran}
\IEEEoverridecommandlockouts

\input{packages.tex}
\input{commands.tex}
\usepackage{ulem}
    
\begin{document}

\title{Community Detection in High-Dimensional Graph Ensembles\\
\thanks{This work was partially supported by Dept of the Air force grant FA8650-19-C-1712 and Army Research Office grant  W911NF2310343.}
}

\author{\IEEEauthorblockN{Robert Malinas, Dogyoon Song, Alfred O. Hero III}
\IEEEauthorblockA{\textit{Department of Electrical \& Computer Engineering} \\
\textit{University of Michigan}\\
Ann Arbor, MI, 48105, USA \\
Email: \{rmalinas, dogyoons, hero\}@umich.edu}
}

\maketitle

\begin{abstract}
Detecting communities in high-dimensional graphs can be achieved by applying random matrix theory where the adjacency matrix of the graph is modeled by a Stochastic Block Model (SBM). However, the SBM makes an unrealistic assumption that the edge probabilities are homogeneous within communities, i.e., the edges occur with the same probabilities. The Degree-Corrected SBM is a generalization of the SBM that allows these edge probabilities to be different, but existing results from random matrix theory are not directly applicable to this heterogeneous model. In this paper, we derive a transformation of the adjacency matrix that eliminates this heterogeneity and preserves the relevant eigenstructure for community detection. We propose a test based on the extreme eigenvalues of this transformed matrix and (1) provide a method for controlling the significance level, (2) formulate a conjecture that the test achieves power one for all positive significance levels in the limit as the number of nodes approaches infinity, and (3) provide empirical evidence and theory supporting these claims.
\end{abstract}

\begin{IEEEkeywords}
community detection, random matrix theory, degree-corrected stochastic block model
\end{IEEEkeywords}

\section{Introduction}
Networks are composed of nodes and pairs of nodes, known as edges or connections, and are capable of representing a diverse range of data. 
For instance, social networks consist of nodes representing users and edges denoting interpersonal relationships \cite{zachary1977information}. 
In satellite communication networks, satellites and microwave channels comprise the nodes and edges, respectively \cite{elbert2008introduction}. 
Due to their versatility, statistical inference with network data is important to a variety of scientific studies such as social science \cite{wasserman1994social}, metabolism \cite{jeong2000large}, and epidemiology \cite{moore2000epidemics} (see \cite{newman2003structure} for a review). 

Networks often contain communities---groups of nodes within which connections are especially dense \cite{newman2004finding}. 
Determining whether or not a network contains communities, which is known as community detection, is a fundamental problem in network data analysis \cite{fortunato2010community}.
Methods for community detection typically fall into one of three categories \cite{zhao2012consistency}: greedy algorithms \cite{newman2004finding}, global optimization methods \cite{shi2000normalized, ng2001spectral, newman2006modularity}, or probabilistic models \cite{holland1983stochastic, airoldi2008mixed, karrer2011stochastic, abbe2018community, newman2007mixture}. 

In this work, we address community detection within a probabilistic model using the framework of statistical hypothesis testing. 
The Erdős–Rényi Model (ERM) \cite{erdos1959random} is perhaps the simplest probabilistic network model, in which $n$ nodes are connected independently at random with some probability $p \in (0,1)$; however, modeling multiple communities requires additional degrees of freedom. 
A generalization of the ERM, the Stochastic Block Model (SBM) \cite{holland1983stochastic} consists of $n$ nodes, each of which belongs to one of $1 \leq k \leq n$ communities, a community assignment function $\varphi:[n] \rightarrow [k]$, and an irreducible symmetric matrix of probabilities $P \in (0,1)^{k \times k}$. 
One typically thinks of $k$ as being much smaller than $n$. 
A network is drawn from the SBM if any pair of nodes $i, j \in [n]$ is connected with probability $(P)_{\varphi(i) \varphi(j)} = (P)_{\varphi(j)\varphi(i)}$ mutually independently from all other node pairs. 
Although simple, the SBM provides a capable sandbox for describing many interesting problems, e.g., planted partition (or clique) detection \cite{condon2001algorithms}, and exhibits interesting 
phase transitions (see \cite{abbe2018community} for a recent survey). 
The hypotheses for community detection in SBMs is straightforward: the null hypothesis is that $k=1,$ i.e., the SBM is an ERM. The alternative hypothesis is that $k > 1.$

Despite its merits, the SBM has limitations in capturing real-world network features like the rich-club phenomenon, where high-degree nodes tend to connect, even across different communities \cite{zhou2004rich, colizza2006detecting}.
More generally, the SBM is too simplistic to represent high levels of degree variation within a single community \cite{lancichinetti2008benchmark}, leading to underfitting in many real networks. 
This underfitting increases the type I error rate, i.e., the chance of incorrectly declaring that more than one community is present, in the context of statistical hypothesis testing for community detection \cite{jin2023phase}. 

To remedy this underfitting, the authors of \cite{karrer2011stochastic} proposed the Degree-Corrected Stochastic Block Model (DCSBM). 
This generalization of the SBM includes additional parameters to capture node-specific connection affinity: the likelihood for a node to connect to other nodes regardless of community membership (cf. Definition \ref{def:DCSBM}). 
However, while offering improved modeling capabilities, working with this more complex model presents significant mathematical challenges. 

This paper proposes a statistical test, based on random matrix theory, for community detection within a large DCSBM. 
Our key contributions are summarized here:
\begin{enumerate}
    \item
    We introduce a transformation method that converts the heterogeneous and intractable adjacency matrix into a tractable Wigner ensemble. Additionally, we present an approach for directly estimating the transformed matrix from a single snapshot of the adjacency matrix.
    \item 
    We propose a method to determine the significance level of the test, which is exact in the limit that $n$ approaches infinity.
    \item
    We argue that our test achieves perfect detection for all positive significance levels as $n$ tends to infinity, and provide empirical evidence of this claim.
\end{enumerate}

In Section \ref{sec:preliminaries}, we provide a formal definition of the DCSBM and identify the main problem addressed by this paper.
In Section \ref{sec:relatedwork}, we overview related work. 
In Section \ref{sec:analysis}, we initiate our analysis, providing relevant results from random matrix theory when necessary. 
In Section \ref{sec:test}, we propose the test, provide a method for estimating its false alarm rate, and argue for its consistency in the limit as the number of nodes approaches infinity. 
In Section \ref{sec:empiricalresults}, we provide empirical evidence supporting the assertions of Section \ref{sec:test}.
In Section \ref{sec:conclusion}, we conclude and suggest directions for future study.

\section{Preliminaries}\label{sec:preliminaries}
We begin with a formal definition of the Degree-Corrected Stochastic Block Model (DCSBM). 
\begin{definition}[Degree-Corrected Stochastic Block Model]\label{def:DCSBM}
Let $n, k \in \NN$ such that $k < n$, and let $\epsilon \in (0, 1/2]$. 
Furthermore, let 
\begin{itemize}
    \item $\varphi:[n] \rightarrow [k]$ be surjective;
    \item $W = (w_{\mu\nu})_{\mu, \nu = 1}^k \in \RR_+^{k \times k}$ be symmetric and irreducible;
    \item $\boldsymbol{\theta} = (\theta_1, \ldots, \theta_n) \in (0, 1]^n$;
\end{itemize}
such that 
\begin{itemize}
    \item $\displaystyle \sum_{i \in \varphi^{-1}(\{\mu\})}\theta_i = 1$ for all $\mu \in [k]$;
    \item $w_{\varphi(i) \varphi(j)} \theta_i \theta_j \in [\epsilon,1 - \epsilon]$ for all $i, j \in [n].$
\end{itemize}

A random matrix $A \in \{0, 1\}^{n \times n}$ is an adjacency matrix drawn from the Degree-Corrected Stochastic Block Model with parameter $(\epsilon, \varphi, \boldsymbol{\theta}, W)$, written $A \sim \DCSBM_n(\epsilon, \varphi, \boldsymbol{\theta}, W)$, if
\begin{enumerate}[label=(\roman*)]
\item $A$ is symmetric;
\item $\{A_{ij}\}_{1 \leq i \leq j \leq n}$ is a mutually independent set of random variables;
\item for all $i, j \in [n]$, if $i \in \varphi^{-1}( \{ \mu \})$ and $j \in \varphi^{-1}(\{ \nu \})$, then\footnote{A random variable $X$ is Bernoulli with parameter $p$, $p \in [0,1],$ which we denote by $X \sim \mathrm{Bern}(p)$, if $X = 1$ with probability $p$ and $X = 0$ otherwise.} $(A)_{ij} \sim \mathrm{Bern} \left ( \theta_i \theta_j w_{\mu \nu} \right )$.
\end{enumerate}
\end{definition}
The symbols $n$ and $k$ will be used throughout the rest of this paper to refer to the number of nodes and the number of communities, respectively, in a DCSBM model.
Interpretations of the parameters $\boldsymbol{\theta}$ and $W$ may be found in \cite{karrer2011stochastic}.
We aim to test the hypotheses
\begin{align*}
H_0 & : k = 1, \\
H_1 & : k > 1.
\end{align*}
The crux of this paper is that the model under $H_0$ is not, in general, an ERM; rather, it is a DCSBM with 1 community. 
The expected degree distribution of a one-community DCSBM may be arbitrary, whereas the ERM expected degree distribution is always a single atom with mass one.

\section{Related Work} \label{sec:relatedwork}
We overview related work on community detection methods that are based on matrices such as the adjacency, modularity, or graph Laplacian matrices, within a DCSBM framework.

\paragraph{Signed Polygon Statistics}
A class of Signed Polygon statistics proposed in \cite{jin2019optimal} assigns scores to each $m$-gon in a network for some $m \geq 3.$ 
The scores are based on the degree of each node in the $m$-gon, and the statistic is the sum of all such scores. 
The detection performance of Signed Polgyon statistics is shown to be robust to sparsity and mixed membership, in which communities may overlap. 
Furthermore, these statistics are capable of detecting small planted cliques with size on the order $n^{1/2}$ \cite{jin2023phase}.
\paragraph{Spectral Clustering}
The authors of \cite{jin2015fast} consider entrywise ratios of a small number of leading eigenvectors of the adjacency matrix. It is argued that these ratio vectors effectively mod out community-independent degree heterogeneity, and clustering the $(k-1)$-tuple of ratios for each node via k-means is capable of detecting communities. 
The authors of \cite{tiomokoali2018improved} characterize the empirical spectral distributions of a class of normalized modularity matrices drawn from the DCSBM. In addition to describing phase transitions for community detection, they propose a spectral clustering algorithm that finds an optimal normalization for the modularity matrix, followed by a k-means clustering of its eigenvectors.

\paragraph{Extreme Eigenvalue Tests for Special DCSBMs}

Perhaps the first work that uses the extreme (the largest and smallest) eigenvalues for community detection, \cite{nadakuditi2012graph} establishes a phase transition in the largest eigenvalue of the modularity matrix for a special case of the DCSBM, namely the planted partition model.
The authors of \cite{bickel2016hypothesis} also test the extreme eigenvalues of a transformed adjacency matrix. The transformation is similar to that of this paper in that they are both entrywise centerings and rescalings of the adjacency matrix; however, \cite{bickel2016hypothesis} does not consider a general DCSBM null hypothesis, which is the main focus of this paper.
Finally, a goodness-of-fit test is proposed in \cite{lei2016goodness} in which $k = k_0$ vs $k \neq k_0$ is tested sequentially for $k_0 = 1, 2, \ldots$ until the $k = k_0$ hypothesis is accepted. 
This method can be adapted to community detection by terminating the sequence after testing $k_0 = 1$; however, like \cite{bickel2016hypothesis}, \cite{lei2016goodness} does not characterize the significance level under a DCSBM null hypothesis.

\section{Community Detection for the DCSBM Using Random Matrix Theory} \label{sec:analysis}
Let $n, \ k, \ \varphi, \ \boldsymbol{\theta} = (\theta_1, \ldots, \theta_n),$ and $W = (w_{\mu\nu})_{\mu, \nu = 1}^k$ be as in Definition \ref{def:DCSBM}, and suppose $A = (a_{ij})_{i, j = 1}^n \sim \DCSBM_n(\epsilon, \varphi, \boldsymbol{\theta}, W)$. We may write
\[
A = \E A + (A - \E A).
\]
\subsection{Analysis Under One-Community DCSBM Null Hypothesis}
Under the null hypothesis $H_0,$
\[
\E A = w_{11} \boldsymbol{\theta \theta}^T,
\]
hence, $A$ is the sum of a rank-one matrix $\E A$ and a centered random matrix $A - \E A$. 
Moreover, the entries in the diagonal and upper triangle $\{ a_{ij}: 1 \leq i \leq j \leq n \}$ are mutually independent.
The difficulty with directly analyzing $A$ under the DCSBM is that the variances
\[
s_{ij} \coloneqq \E \left \vert a_{ij} - \E a_{ij} \right \vert^2 = w_{11} \theta_i \theta_j (1 - w_{11} \theta_i \theta_j), \ \ \forall i, j \in [n],
\]
are, in general, heterogeneous in that they depend on $i,j$. 
Due to this heterogeneity, the eigenvalues of $A - \E A$ can only be described implicitly.
Specifically, the empirical spectral distribution of $A - \E A$ (cf. Definition \ref{defn:esd}) is asymptotically characterized via its Stieltjes transform as the implicit solution to a quadratic vector equation \cite{ajanki2017universality, ajanki2019quadratic}, the fundamental properties of which have only recently been described. 
Even less is known about finite-rank perturbations of such a matrix, making it difficult to describe the extreme eigenvalues of $A$.

\subsection{Special Case with Homogeneous Parameters}
In the special case of homogeneous parameters where $\theta_i = \theta_j$ for all $i, j \in [n]$, the DCSBM model under the null hypothesis reduces to an ERM with parameter $w_{11} \theta_1^2.$ In this special case, $A - \E A$ is distributed as a scaled Wigner ensemble \cite{wigner1951statistical, wigner1955characteristic, wigner1958distribution}, defined below.
\begin{definition}[Wigner ensemble {\cite{baisilversteinbook}}]\label{defn:wigner}
    Let $H = (h_{ij})_{i,j=1}^n \in \CC^{n \times n}$ be a Hermitian random matrix. 
    The random matrix ensemble $H$ is a Wigner ensemble if its entries are centered and normalized, i.e., $\E h_{ij} = 0$ and $\E |h_{ij}|^2 = \frac{1}{n}$ for all $i, j \in [n]$, and its entries in the diagonal and upper triangle $\{ h_{ij}: 1 \leq i \leq j \leq n \}$ are mutually independent.
\end{definition}
Much is known about Wigner ensembles and, in particular, their extreme eigenvalues and the extreme eigenvalues of low-rank perturbations thereof. Therefore, it is desirable to work with Wigner ensembles when possible. To this end, we define a map that transforms a DCSBM adjacency matrix $A$ to a (scaled) Wigner matrix $B$.
\begin{proposition}
Let $n, \ k, \ \varphi, \ \boldsymbol{\theta} = (\theta_1, \ldots, \theta_n),$ and $W = (w_{\mu\nu})_{\mu, \nu = 1}^k$ be as in Definition \ref{def:DCSBM}, and suppose $A \sim \DCSBM_n(\varphi, \boldsymbol{\theta}, W)$. Furthermore, let $B \in \RR^{n \times n}$ such that
\[
(B)_{ij} = \dfrac{(A)_{ij} - \theta_i \theta_j w}{\sqrt{ \theta_i \theta_j w (1 - \theta_i \theta_j w) } }, \qquad \forall i, j \in [n],
\]
where $w \equiv w_{11}.$
If $k = 1$, then $n^{-1/2} \cdot B$ is a Wigner ensemble.
\end{proposition}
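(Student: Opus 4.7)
The plan is to verify, one at a time, each of the four defining properties of a Wigner ensemble in Definition \ref{defn:wigner} applied to $H \coloneqq n^{-1/2} B$. Since $k=1$, the community assignment map $\varphi$ is constant, so $w_{\varphi(i)\varphi(j)} = w_{11} = w$ for every pair $i,j$, and $(A)_{ij} \sim \mathrm{Bern}(\theta_i \theta_j w)$ by Definition \ref{def:DCSBM}(iii). The DCSBM constraint $\theta_i \theta_j w \in [\epsilon, 1-\epsilon]$ with $\epsilon > 0$ guarantees that the denominator $\sqrt{\theta_i \theta_j w (1 - \theta_i \theta_j w)}$ is strictly positive, so each $B_{ij}$ is well-defined; I would begin by pointing this out so that the subsequent normalization step does not divide by zero.

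Next I would handle the symmetry of $H$. The numerator $(A)_{ij} - \theta_i \theta_j w$ is invariant under swapping $i$ and $j$ because $A$ is symmetric by Definition \ref{def:DCSBM}(i) and $\theta_i \theta_j = \theta_j \theta_i$; the denominator is similarly symmetric. Hence $B_{ij} = B_{ji}$ and $H$ is real symmetric (in particular, Hermitian).

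For the centering and normalization conditions, I would compute directly from the Bernoulli distribution of $(A)_{ij}$. Centering is immediate: $\E (A)_{ij} = \theta_i \theta_j w$, so the numerator has mean zero and $\E H_{ij} = 0$. For the variance, the Bernoulli identity $\mathrm{Var}((A)_{ij}) = \theta_i \theta_j w (1 - \theta_i \theta_j w)$ exactly cancels the square of the denominator, giving $\E |B_{ij}|^2 = 1$ and therefore $\E |H_{ij}|^2 = 1/n$ as required.

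Finally, independence of $\{H_{ij} : 1 \leq i \leq j \leq n\}$ is inherited from independence of $\{A_{ij} : 1 \leq i \leq j \leq n\}$ in Definition \ref{def:DCSBM}(ii), because each $B_{ij}$ (hence each $H_{ij}$) is a deterministic function of the single random variable $A_{ij}$, with deterministic coefficients depending only on the fixed parameters $\theta_i, \theta_j, w$. Putting the four verifications together establishes that $H = n^{-1/2} B$ satisfies Definition \ref{defn:wigner}. There is no real obstacle here; the proposition is a direct book-keeping exercise, and the only subtlety worth flagging explicitly is the appeal to the DCSBM boundedness assumption $\theta_i \theta_j w \in [\epsilon, 1-\epsilon]$ to ensure that the rescaling is well-defined.
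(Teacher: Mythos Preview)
Your proposal is correct and mirrors the paper's proof: both simply invoke Definition~\ref{def:DCSBM}(i)--(ii) for symmetry and independence, and then compute $\E B_{ij}=0$ and $\E|B_{ij}|^2=1$ from the Bernoulli mean and variance. Your extra remark that the $\epsilon$-boundedness ensures a nonzero denominator is a minor addition the paper omits, but otherwise the arguments are the same.
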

\begin{proof}
It follows from (i) and (ii) of Definition \ref{def:DCSBM} that $B$ is symmetric with independent entries in the diagonal and upper triangle. The proof is complete by noticing that for any $i, j \in [n],$
\begin{enumerate}
\item 
\begin{align*}
    \E \left [ n^{-1/2}\cdot (B)_{ij} \right ] & = n^{-1/2} \cdot \dfrac{\E(A)_{ij} - \theta_i \theta_j w}{\sqrt{ \theta_i \theta_j w (1 - \theta_i \theta_j w) } } \\
    & = n^{-1/2} \cdot \dfrac{\theta_i \theta_j w - \theta_i \theta_j w}{\sqrt{ \theta_i \theta_j w (1 - \theta_i \theta_j w) } } \\
    & = 0,
\end{align*}
and
\item 
\begin{align*}
\E \left \vert n^{-1/2 } \cdot (B)_{ij} \right \vert^2 & = \dfrac{ \E \left \vert (A)_{ij} - \theta_i \theta_j w \right \vert^2}{n \cdot \theta_i \theta_j w (1 - \theta_i \theta_j w)} \\
& = \dfrac{ \theta_i \theta_j w - (\theta_i \theta_j w)^2}{n \cdot \theta_i \theta_j w (1 - \theta_i \theta_j w)} \\
& = \dfrac{1}{n}.
\end{align*}

\end{enumerate}
\end{proof}
\subsection{Properties of Wigner Ensembles}
We now discuss some important results from random matrix theory. Following the seminal work of Baik, Ben Arous, and Péché \cite{baik2005phase}, which studies the extreme eigenvalues of `spiked' sample covariance matrices \cite{johnstone2001distribution}, significant attention was devoted to developing a parallel theory for low-rank perturbations of Wigner ensembles \cite{peche2006largest, bloemendal2012limits, pizzo2013finite, knowles2013isotropic, knowles2014outliers, lee2014necessary, bloemendal2016limits}. The key relevant results of this line of work, under mild technical conditions, include:
\begin{enumerate}
\item the distributions of the extreme eigenvalues;
\item a sharpening of the empirical spectral distribution asymptotics (cf. Definition \ref{defn:esd}) from \cite{wigner1951statistical, wigner1955characteristic, wigner1958distribution};
\item a phase transition for the extreme eigenvalues under an additive low-rank perturbation. 
\end{enumerate}
Next, we provide the specific results required for our community detection analysis.

\subsubsection{Distribution of the Extreme Eigenvalues}

Under mild technical conditions, the marginal distributions of the extreme eigenvalues of an appropriately centered-and-rescaled Wigner ensemble each converge weakly (cf. {\cite[Section 25]{billingsley1995probability}}) to the Tracy-Widom distribution, defined next.

\begin{definition}[Tracy-Widom ensemble \cite{tracy1994level, tracy1996orthogonal}] \label{def:TWensemble}
The Tracy-Widom distribution $\TW_1$ is the probability measure on $(\RR, \mathcal{B}(\RR))$ with CDF
\[
F_1(x) = \exp \left [ -\dfrac{1}{2} \int_{x}^\infty q(y) + (y - x) q^2(y) \dd y \right ],
\]
where $q(y)$ is the unique solution to the Painlevé II differential equation
\[
\dfrac{\dd^2}{\dd y^2}q(y) = y q(y) + 2 q^3(y),
\]
satisfying the boundary condition $q(y) \asymp \Ai(y)$ as $y \rightarrow \infty$, where $\Ai(y)$ denotes the Airy function of the first kind.
\end{definition}

The Tracy-Widom density, i.e., $\frac{\dd F_1}{\dd x}$ is plotted in red in Figure \ref{fig:TW1}.

\begin{theorem}[{\cite[excerpted from Theorem 1.2]{lee2014necessary}}]
For each $n \in \NN,$ let $H_n$ be an $n \times n$ Wigner ensemble. Suppose that
\begin{equation}\label{eq:tech_cond_tw}
\lim_{s \rightarrow +\infty} s^4 \cdot P \left ( \left \vert n^{1/2} (H)_{12} \right \vert \geq s \right ) = 0.
\end{equation}
Then
\[
P \left ( n^{2/3} \cdot (\lambda_1(H_n) - 2) \leq x \right ) \rightarrow F_1(x), \qquad \forall x \in \RR.
\]
Moreover, the logical inverse holds, i.e., \eqref{eq:tech_cond_tw} is necessary. A similar result holds for $-\lambda_{n}(H_n)$.
\end{theorem}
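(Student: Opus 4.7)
My plan is to establish edge Tracy-Widom universality by reduction to the Gaussian Orthogonal Ensemble (GOE), where the $F_1$ limit is classical via Painlev\'e II. The strategy has two prongs: a truncation that removes atypically large entries using the tail hypothesis, followed by a Lindeberg-style Green function comparison that transfers edge statistics from the Gaussian reference.

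\textbf{Sufficiency.} First, pick a truncation level $L_n \to \infty$ growing slowly enough that $n^2 \cdot P(|n^{1/2}(H)_{12}| \geq L_n) \to 0$; the hypothesis yields an envelope $\epsilon(s) \coloneqq s^4 P(|n^{1/2}(H)_{12}| \geq s) \to 0$, so any $L_n$ with $L_n^4 \gg n^2 \epsilon(L_n)$ suffices. Define $\tilde{H}_n$ by entrywise truncation at scale $n^{-1/2} L_n$ followed by recentering to mean zero, and set $E_n = H_n - \tilde{H}_n$. A union bound over the $\binom{n+1}{2}$ independent entries yields $E_n = 0$ on an event of probability $1 - o(1)$, while the recentering shift contributes a negligible operator-norm perturbation well below the $n^{-2/3}$ edge scale. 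Since the entries of $\tilde{H}_n$ are bounded by $n^{-1/2} L_n$ and possess a finite fourth moment (via $\E |Z|^4 = 4\int_0^\infty s^3 P(|Z|\geq s)\,\dd s$ controlled by $\epsilon(s)/s$ at large $s$), the local semicircle law and edge rigidity estimates apply to $\tilde{H}_n$. I would then perform a Lindeberg swap, replacing the entries of $\tilde{H}_n$ one at a time with Gaussian entries of matching variance; the error per swap is bounded by a third-order Taylor expansion of the resolvent, with derivatives controlled by rigidity. Summing the $O(n^2)$ swap errors yields agreement of the joint edge statistics with GOE up to scale $n^{-2/3}$, and the classical GOE Tracy-Widom theorem concludes.

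\textbf{Necessity.} By contrapositive, if $\limsup_{s \to \infty} s^4 P(|n^{1/2}(H)_{12}| \geq s) > 0$, then along a subsequence of $n$ one can locate a threshold of order one such that at least one entry of $H_n$ exceeds it with probability bounded away from zero. A test-vector lower bound $\lambda_1(H_n) \geq v^\top H_n v$ with $v = (\be_i \pm \be_j)/\sqrt{2}$ then exhibits a spectral contribution of the same order beyond the Wigner edge $2$, so $n^{2/3}(\lambda_1(H_n) - 2)$ has a diverging component with positive probability and cannot converge weakly to $F_1$. The analogous result for $\lambda_n(H_n)$ follows by applying the theorem to $-H_n$, which is itself a Wigner ensemble.

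\textbf{Main obstacle.} The principal difficulty is pushing every estimate down to the delicate edge scale $n^{-2/3}$: the rank-structured perturbation from recentering must not shift the top eigenvalue by $\omega(n^{-2/3})$; the exceptional truncation event must contribute $o(1)$ to the edge CDF uniformly in the spectral parameter; and the Green function comparison must accommodate entries with only barely-finite fourth moment rather than the cleaner sub-Gaussian setting in which the underlying resolvent bounds were originally developed.
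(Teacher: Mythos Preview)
The paper does not contain a proof of this theorem at all: it is quoted verbatim as an excerpt from \cite{lee2014necessary} and invoked as a black box, with the only follow-up remark being the one-line observation that $n^{-1/2}\cdot B$ satisfies \eqref{eq:tech_cond_tw} because its entries are almost surely bounded. There is therefore no in-paper argument against which to compare your proposal.

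That said, your outline is broadly aligned with the strategy of the cited reference (truncation to reduce to a finite-moment setting, then a Green-function comparison to transfer edge statistics from the Gaussian case), so at the level of a sketch it is reasonable. Two points deserve more care. First, in the sufficiency direction, the Lindeberg swap at the edge requires more than ``matching variance'': one either matches the first few moments (in the Tao--Vu four-moment spirit) or uses the continuous interpolation of Erd\H{o}s--Yau--Yin together with the optimal local law; either way, the cancellation that keeps the cumulative swap error below $n^{-2/3}$ is not simply third-order Taylor control and should be stated precisely. Second, your necessity argument has a genuine gap: the test-vector bound $v^{\top} H_n v$ with $v=(\be_i\pm\be_j)/\sqrt{2}$ only yields $\lambda_1(H_n)\geq \tfrac12(h_{ii}+h_{jj})\pm h_{ij}$, which is $O(|h_{ij}|)$ and need not exceed the Wigner edge $2$. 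Showing that a single large entry actually produces an outlier beyond $2$ (or a fluctuation of $\lambda_1$ incompatible with $\TW_1$ at scale $n^{-2/3}$) requires a BBP-type perturbation argument or a more careful extreme-value analysis of the entries along a matched subsequence of $n$; a bare Rayleigh quotient does not deliver it.
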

Trivially, $n^{-1/2}\cdot B$ satisfies \eqref{eq:tech_cond_tw} because its entries are bounded almost surely.

\subsubsection{Bulk Characterization of the Spectrum}

Other key results offer characterizations of the bulk of the spectrum of a Wigner ensemble, i.e., the eigenvalues outside of an epsilon neighborhood of the extreme eigenvalues.
Namely, the empirical spectral distribution, defined below, of a Wigner matrix converges almost surely to the semicircle law. 
\begin{definition}[Empirical spectral distribution]\label{defn:esd}
    Let $H \in \CC^{n \times n}$ be a Hermitian matrix. 
    The empirical spectral distribution (ESD) of $H$ is the probability measure
    \begin{equation}\label{eqn:esd}
        \mu_H \coloneqq \dfrac{1}{n} \sum_{i = 1}^n \delta_{\lambda_i(H)},
    \end{equation}
    where $\left (\lambda_i(H) \right )_{i = 1}^n$ is the multiset of eigenvalues of $H$, and $\delta_x$ is the Dirac measure with support $\{x\}$ for any $x \in \RR$.
\end{definition}
Define 
\[
\mu_{sc}(E) = \int_{E} \rho_{sc}(x) \dd x, \qquad \forall E \in \mathcal{B}(\RR),
\]
where $\rho_{sc}(x) \coloneqq \dfrac{1}{2 \pi} \sqrt{4 - x^2} \mathbbm{1}_{[-2, \hspace{2pt} 2]}$ for all $ x \in \RR.$
\begin{theorem}[{\cite[Theorem 2.5]{baisilversteinbook}}]
 For each $n \in \NN,$ suppose that $H_n$ is an $n \times n$ Wigner ensemble. Then, with probability one, 
 \[
\mu_{H_n} \rightarrow \mu_{sc}
\]
weakly.
\end{theorem}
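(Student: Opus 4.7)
The plan is to prove convergence via the \emph{method of moments}, which is the standard route to the Wigner semicircle law and is well matched to what Definition \ref{defn:wigner} gives us (independence of upper-triangular entries, zero mean, variance $1/n$, Hermitian symmetry). Weak convergence will be established from the almost-sure convergence of all moments of $\mu_{H_n}$ to the corresponding moments of $\mu_{sc}$. Since $\mu_{sc}$ has compact support $[-2,2]$ it is determined by its moments, so once moment convergence is in hand the Fr\'echet--Shohat theorem (or, equivalently, the portmanteau characterization via polynomials) upgrades it to weak convergence of probability measures.

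The first step is to express the $k$-th moment of the ESD as a normalized trace,
\[
\int x^k \, \dd \mu_{H_n}(x) \;=\; \frac{1}{n}\sum_{i=1}^{n} \lambda_i(H_n)^k \;=\; \frac{1}{n}\tr(H_n^k) \;=\; \frac{1}{n}\!\!\sum_{i_1,\ldots,i_k \in [n]} h_{i_1 i_2} h_{i_2 i_3} \cdots h_{i_k i_1}.
\]
Taking expectations, each closed walk $(i_1,\ldots,i_k,i_1)$ on $[n]$ contributes $\prod \E[h_{i_p i_{p+1}}]$ to leading order, but by independence and mean zero only walks in which every edge is traversed at least twice survive. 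A classical combinatorial argument (Wick/pair-partition counting) shows that for odd $k$ the contribution vanishes and that for even $k=2m$ the dominant contribution comes from walks that trace out a plane tree with $m$ edges, each traversed exactly twice. Counting such walks gives $n^{m+1}$ walks weighted by $n^{-m}$ from the variance normalization, times the Catalan number $C_m$, yielding $\frac{1}{n}\E[\tr(H_n^{2m})] \to C_m$, which is exactly the $2m$-th moment of $\mu_{sc}$.

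To promote convergence in expectation to almost-sure convergence, I would bound the variance of $\frac{1}{n}\tr(H_n^k)$ by a similar pairing argument over pairs of closed walks: the leading-order pairings cancel between $\E[(\tr H_n^k)^2]$ and $(\E \tr H_n^k)^2$, leaving a bound of order $n^{-2}$ after the $1/n$ prefactors are accounted for. This is summable in $n$, so the Borel--Cantelli lemma gives $\frac{1}{n}\tr(H_n^k) \to C_{k/2}$ (or $0$ for odd $k$) almost surely for each fixed $k$. Taking a countable intersection over $k \in \NN$ preserves probability one, so on a single event of full probability every moment of $\mu_{H_n}$ converges to the corresponding moment of $\mu_{sc}$.

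The main obstacle is the combinatorial bookkeeping in step two: showing rigorously that only non-crossing pair partitions (equivalently, plane trees) contribute to the leading order, and that all other graph topologies are lower order in $n$. The secondary, more technical point is that a Wigner ensemble per Definition \ref{defn:wigner} does not come with a uniform moment bound beyond the variance, so a priori the $k$-th moment argument could fail for $k \geq 3$; however, this is handled by a standard truncation-plus-comparison argument, replacing each entry by its truncation at a slowly growing threshold and verifying that the truncated ensemble has the same limiting moments while differing from the original in operator norm by a negligible amount with probability one. Once these two technical ingredients are in place, the method of moments delivers the theorem.
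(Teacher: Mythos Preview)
The paper does not supply its own proof of this statement: it is quoted verbatim as \cite[Theorem 2.5]{baisilversteinbook} and left unproved, serving only as background for the subsequent community-detection argument. Your proposal is the classical method-of-moments proof (trace expansion, reduction to non-crossing pair partitions and Catalan numbers, variance bound plus Borel--Cantelli, truncation to handle the absence of higher-moment assumptions in Definition~\ref{defn:wigner}), which is precisely the argument given in the cited reference, so in that sense your approach matches the paper's intended route.
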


\subsection{Estimating the Transformed Matrix $B$}\label{sec:estimate_B}
In section \ref{sec:analysis}, we introduced a transformed adjacency matrix $B$ and proved that it is a Wigner ensemble under the null hypothesis; however, $B$ depends on the unknown quantities $\boldsymbol{\theta}$ and $w_{11}.$ Therefore, we must estimate $B$ to form a viable statistical test. We propose the estimator
\[
(\hat{B})_{ij} = \dfrac{(A)_{ij} - \frac{(A \bOne \bOne^T A)_{ij}}{\bOne^T A \bOne}}{\sqrt{\frac{(A \bOne \bOne^T A)_{ij}}{\bOne^T A \bOne} \left (1 - \frac{(A \bOne \bOne^T A)_{ij}}{\bOne^T A \bOne} \right )}}, \qquad \forall i, j \in [n],
\]
informed by the following theorem. 
\begin{theorem}
\label{thm:entrywise_estimator}
Let $n, \ k, \ \varphi, \ \boldsymbol{\theta} = (\theta_1, \ldots, \theta_n),$ and $W = (w_{\mu\nu})_{\mu, \nu = 1}^k$ be as in Definition \ref{def:DCSBM}, and suppose $A = (a_{ij})_{i, j = 1}^n \sim \DCSBM_n(\epsilon, \varphi, \boldsymbol{\theta}, W)$. If $k = 1$, then 
\begin{align*}
P & \left ( \left \vert  \left ( \dfrac{A \bOne \bOne^T A}{ \bOne^T A \bOne} \right )_{ij} - w \theta_i \theta_j \right \vert \leq \dfrac{8}{\epsilon} \left ( \sqrt{t} n^{-1/2} + t n^{-1} \right ) \right ) \\
& \geq 1 - 2 \left ( 2 e^{-2 t^2} + e^{-\frac{n^2 \epsilon^2}{18}} \right ), \qquad \forall t > 0, \ \forall i, j \in [n],
\end{align*}
where $w \equiv w_{11}.$
\end{theorem}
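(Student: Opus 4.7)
The plan is to rewrite the empirical estimator in terms of the empirical degrees $d_i := (A\mathbf{1})_i$ and the empirical total edge count $M := \mathbf{1}^T A\mathbf{1}$, so that $(A\mathbf{1}\mathbf{1}^T A)_{ij}/(\mathbf{1}^T A\mathbf{1})$ is exactly $d_i d_j/M$. Under $k=1$, the normalization $\sum_\ell \theta_\ell = 1$ gives $\E d_i = w\theta_i$ and $\E M = w$, so the idealized ratio equals $w\theta_i\theta_j$ and only the fluctuations need to be controlled.

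The central algebraic step is the identity
\[
\frac{d_i d_j}{M} - w\theta_i\theta_j \;=\; \frac{w\theta_j\Delta_i + w\theta_i\Delta_j + \Delta_i\Delta_j - w\theta_i\theta_j\Delta_M}{M},
\]
with $\Delta_i := d_i - w\theta_i$ and $\Delta_M := M - w$. I would bound each of the four numerator terms separately and then divide by a high-probability lower bound on $M$.

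Concentration of the three fluctuations follows from Hoeffding's inequality applied to the independent set $\{A_{k\ell} : k \leq \ell\}$. In terms of these independent variables, $d_i$ is a sum of $n$ Bernoullis in $[0,1]$, yielding $P(|\Delta_i| \geq s) \leq 2 e^{-2s^2/n}$; and $M$ is a weighted sum (weight $2$ off-diagonal, weight $1$ on-diagonal) of $\binom{n}{2}+n$ Bernoullis, yielding $P(|\Delta_M| \geq s) \leq 2 e^{-s^2/n^2}$ up to constants. For the denominator, the constraint $w\theta_i\theta_j \geq \epsilon$ forces $\E M \geq n^2\epsilon$, and the same Hoeffding bound on $M$ therefore gives $M \geq n^2\epsilon/2$ with probability at least $1 - 2e^{-n^2\epsilon^2/18}$. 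The coefficients multiplying $\Delta_i,\Delta_j,\Delta_M$ are in turn controlled by the elementary bounds $w\theta_i = \sum_\ell w\theta_i\theta_\ell \leq n(1-\epsilon)$ and $w\theta_i\theta_j \leq 1-\epsilon$ coming directly from Definition~\ref{def:DCSBM}.

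A union bound over the three concentration events then yields the stated inequality. Each of the linear terms $w\theta_j\Delta_i$, $w\theta_i\Delta_j$, and $w\theta_i\theta_j\Delta_M$ contributes an $O(tn^{-1/2}/\epsilon)$ piece after dividing by $M \gtrsim n^2\epsilon$, while the cross term $\Delta_i\Delta_j$ contributes an $O(t^2 n^{-1}/\epsilon)$ piece. The real obstacle here is not conceptual but numerical bookkeeping: calibrating the Hoeffding parameters and the lower-bound event on $M$ so that the three linear-in-$\Delta$ contributions package neatly into the $\sqrt{t}\,n^{-1/2}$ term and the quadratic cross term packages into the $t\,n^{-1}$ term of $\frac{8}{\epsilon}(\sqrt{t}n^{-1/2} + tn^{-1})$, while keeping the failure probability at $2(2e^{-2t^2} + e^{-n^2\epsilon^2/18})$.
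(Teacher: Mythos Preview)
Your proposal is correct and follows essentially the same route as the paper: both decompose the error in terms of the degree fluctuations $\Delta_i=d_i-w\theta_i$ and the total fluctuation $\Delta_M=M-w$, control each via Hoeffding's inequality, lower-bound $M$ through $\E M=w\ge n^2\epsilon$, and assemble everything by a union bound. The only cosmetic difference is that you write out the exact identity $E_{ij}=\bigl(w\theta_j\Delta_i+w\theta_i\Delta_j+\Delta_i\Delta_j-w\theta_i\theta_j\Delta_M\bigr)/M$, whereas the paper passes through a generic ratio-perturbation lemma $\bigl|\tfrac{a+c}{b+d}-\tfrac{a}{b}\bigr|\le\tfrac{|c|+|d|}{|b+d|}$ before expanding $d_id_j-w^2\theta_i\theta_j$ into the same three pieces; your version is marginally sharper in the $\Delta_M$ coefficient but leads to the same final inequality.
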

\begin{proof}
See Appendix (Section \ref{sec:proof_main}).
\end{proof}
To summarize, $\frac{(A \bOne \bOne^T A)_{ij}}{\bOne^T A \bOne}$ fluctuates around the unknown parameters $w_{11} \theta_i \theta_j$ at the scale $n^{-1/2}$ for any $i, j \in [n].$ 
Moreover, the fluctuation is subgaussian. 
From these facts, we conjecture the following.

\begin{conjecture}\label{conj:statistic}
Let $n, \ k, \ \varphi, \ \boldsymbol{\theta} = (\theta_1, \ldots, \theta_n),$ and $W = (w_{\mu\nu})_{\mu, \nu = 1}^k$ be as in Definition \ref{def:DCSBM}, and suppose $A = (a_{ij})_{i, j = 1}^n \sim \DCSBM_n(\epsilon, \varphi, \boldsymbol{\theta}, W)$.
Then if $k = 1$,
\[
\left \vert P \left ( n^{2/3} \cdot  \left ( \lambda_{1} \left ( n^{-1/2} \cdot \hat{B} \right ) - 2 \right ) \leq x \right ) - F_1(x) \right \vert = o(1),
\]
and
\[
\left \vert P \left ( n^{2/3} \cdot  \left ( - \lambda_{n} \left ( n^{-1/2} \cdot \hat{B} \right ) - 2 \right )  \leq x \right ) - F_1(x) \right \vert = o(1),
\]
for all $x \in \RR$.
\end{conjecture}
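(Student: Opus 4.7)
The plan is to reduce the conjecture to the established Tracy-Widom limit for the oracle Wigner matrix $n^{-1/2}B$ by controlling the plug-in perturbation $\hat B-B$ both entrywise and structurally. First, I would invoke the Tracy-Widom theorem stated earlier in the paper for $n^{-1/2}B$: since $A_{ij}\in\{0,1\}$ and $p_{ij}\coloneqq\theta_i\theta_jw\in[\epsilon,1-\epsilon]$, the entries of $B$ are uniformly bounded, so condition \eqref{eq:tech_cond_tw} holds trivially and both $n^{2/3}(\lambda_1(n^{-1/2}B)-2)$ and $n^{2/3}(-\lambda_n(n^{-1/2}B)-2)$ converge in distribution to $\TW_1$. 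Then I would Taylor-expand the map $p\mapsto (A_{ij}-p)/\sqrt{p(1-p)}$ around $p=p_{ij}$, which is smooth with bounded derivatives on $[\epsilon,1-\epsilon]$, to obtain
\[
\hat B_{ij}-B_{ij}=-\frac{\hat p_{ij}-p_{ij}}{\sqrt{p_{ij}(1-p_{ij})}}+R_{ij},\qquad R_{ij}=O\bigl((\hat p_{ij}-p_{ij})^2\bigr),
\]
where $\hat p_{ij}\coloneqq(A\bOne\bOne^TA)_{ij}/(\bOne^TA\bOne)$. Combining Theorem~\ref{thm:entrywise_estimator} with a union bound over the $n^2$ entries gives $\max_{ij}|\hat p_{ij}-p_{ij}|=O(n^{-1/2}\sqrt{\log n})$ with probability $1-o(1)$, so each entry of $R$ is $O(n^{-1}\log n)$.

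The key structural step is to exhibit the low rank of the leading perturbation. Writing $d_i\coloneqq\sum_j A_{ij}$, $2m\coloneqq\bOne^TA\bOne$, and expanding $\hat p_{ij}=d_id_j/(2m)$ around its mean yields, to leading order,
\[
\hat p_{ij}-p_{ij}\approx \theta_i(d_j-\E d_j)+\theta_j(d_i-\E d_i)-\theta_i\theta_j(2m-\E 2m),
\]
which is a matrix of rank at most three in $(i,j)$, built from the centered degree vector. This suggests the decomposition $\hat B-B=L+E$, where $L$ is a random rank-$O(1)$ matrix constructed from the centered degrees and $E$ is the higher-order residual, whose entries are $O(n^{-1}\log n)$ by the calculation above. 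For the residual, a Bandeira--van Handel style operator norm bound for matrices with small entries should give $\|n^{-1/2}E\|_{\mathrm{op}}=o(n^{-2/3})$ with probability $1-o(1)$, so Weyl's inequality passes its effect through harmlessly on the Tracy-Widom scale. For $L$, one should argue that its nonzero singular values remain below the BBP phase-transition threshold at the edge of $n^{-1/2}B$, so that $L$ does not create an outlier and only shifts the extreme eigenvalues on a sub-Tracy-Widom scale.

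The main obstacle is the BBP-style analysis of the dependent rank-$O(1)$ piece $L$. Classical BBP theorems require the perturbation to be independent of the Wigner background, but here $L$ and $B$ are both measurable functions of the same matrix $A$, so the standard results do not apply off-the-shelf. A rigorous proof will most likely require a Lindeberg/Green's-function comparison between $n^{-1/2}\hat B$ and $n^{-1/2}B$, in the spirit of the isotropic local laws of Knowles--Yin and the edge universality arguments of Lee--Yin, that interpolates through this dependence while tracking the induced low-rank correction. A naive Weyl bound is hopeless at this scale: the entrywise estimate $|\hat B_{ij}-B_{ij}|=O(n^{-1/2}\sqrt{\log n})$ only controls $\|n^{-1/2}(\hat B-B)\|_{\mathrm{op}}$ at size $O(1)$, roughly a factor of $n^{2/3}$ too large for the $n^{-2/3}$ Tracy-Widom window. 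Making the rank-reduction argument rigorous and handling the self-dependence of $L$ on the background matrix is precisely why the paper states this result as a conjecture rather than a theorem.
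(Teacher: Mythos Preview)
The paper does not prove this statement: it is explicitly labeled a conjecture, supported only by the simulations in Section~\ref{sec:empiricalresults}, and the conclusion lists ``rigorous proofs of Conjectures~\ref{conj:statistic} and~\ref{conj:far}'' as future work. There is therefore no paper argument to compare your sketch against; you have correctly read the situation in your final paragraph.

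Your outline is a reasonable attack plan and correctly isolates the decisive obstacle, namely that the low-rank plug-in correction is a measurable function of the same randomness as $B$, so off-the-shelf BBP and edge-universality theorems do not apply directly. One technical wrinkle worth flagging: your first-order Taylor expansion of $p\mapsto(A_{ij}-p)/\sqrt{p(1-p)}$ omits the contribution from differentiating the denominator, which produces an additional first-order term proportional to $B_{ij}\,(\hat p_{ij}-p_{ij})$. This term is a Hadamard product of $B$ with an (approximately) low-rank matrix and is not itself low rank, so the clean split $\hat B-B=L+E$ into ``rank-$O(1)$ plus entrywise $O(n^{-1}\log n)$'' is more delicate than you state; in particular, an off-the-shelf Bandeira--van~Handel bound will not dispose of it. This does not invalidate your overall strategy---such multiplicative perturbations of a Wigner matrix are plausibly absorbable into an edge-universality comparison---but it reinforces your own conclusion that the honest route is a Green-function/local-law interpolation rather than a Weyl-plus-BBP patchwork, which is presumably why the authors left the statement open.
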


In other words, the distributions of the extreme eigenvalues of $\hat{B}$ are the same as those of a Wigner ensemble asymptotically. 

\section{Test Statistic}\label{sec:test}
We propose the statistic
\begin{align*}
T = \max \left \{ n^{2/3} \cdot  \left ( \lambda_{1} \left ( n^{-1/2} \cdot \hat{B} \right ) - 2 \right ) \right., & \\
\left. \ n^{2/3} \cdot  \left ( - \lambda_{n} \left ( n^{-1/2} \cdot \hat{B} \right ) - 2 \right ) \right \}.
\end{align*}

This statistic is reasonable because
\begin{enumerate}
\item under the null hypothesis, given Conjecture \ref{conj:statistic}, $T$ fluctuates around $0,$ following a $\TW_1$ distribution asymptotically;
\item under the alternative hypothesis, we expect that $\left \vert \lambda_1 \left ( n^{-1/2} \cdot \hat{B} \right ) - 2\right \vert \gg n^{-2/3}$ for a large class of alternative models $W,$ $\boldsymbol{\theta},$ and $\varphi,$ with a similar statement holding for $-\lambda_n \left ( n^{-1/2} \cdot \hat{B} \right ).$
\end{enumerate}
The latter point comes from a BBP-type phase transition for low-rank perturbations of Wigner ensembles, e.g., {\cite[Theorem 2.7]{knowles2013isotropic}}. In section \ref{sec:empiricalresults}, we provide empirical evidence that indicates a threshold test based on $T$ attains power one in the limit as $n \rightarrow \infty$ for all positive significance levels, i.e., the test is asymptotically consistent. We leave a thorough power analysis for future work.

\subsection{False Alarm Rate}

Based on Conjecture \ref{conj:statistic}, we pose the following conjecture on the false alarm rate of the proposed statistic $T$.
\begin{conjecture} \label{conj:far}
Let $G_1 = 1 - F_1$. If $k = 1,$ then
\[
P \left ( T \geq G^{-1}_1(\alpha/2) \right ) \lesssim \alpha, \qquad \forall \alpha \in (0, 1).
\]
\end{conjecture}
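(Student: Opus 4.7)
The plan is to deduce Conjecture \ref{conj:far} directly from Conjecture \ref{conj:statistic} by a union bound over the two extreme-eigenvalue tails that define $T$; the $\lesssim$ in the conclusion then absorbs the $o(1)$ error from the conjectured Tracy-Widom limits.

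First, I would write $T = \max\{T^+, T^-\}$ with
\[
T^+ \coloneqq n^{2/3}\!\left(\lambda_1(n^{-1/2}\hat{B}) - 2\right), \qquad T^- \coloneqq n^{2/3}\!\left(-\lambda_n(n^{-1/2}\hat{B}) - 2\right),
\]
so that $\{T \geq G_1^{-1}(\alpha/2)\} = \{T^+ \geq G_1^{-1}(\alpha/2)\} \cup \{T^- \geq G_1^{-1}(\alpha/2)\}$. A union bound then gives
\[
P\!\left(T \geq G_1^{-1}(\alpha/2)\right) \leq P\!\left(T^+ \geq G_1^{-1}(\alpha/2)\right) + P\!\left(T^- \geq G_1^{-1}(\alpha/2)\right).
\]
By Conjecture \ref{conj:statistic}, each of $T^+$ and $T^-$ converges in distribution to $\TW_1$; since $F_1$ is continuous, pointwise CDF convergence upgrades to tail convergence at the fixed point $G_1^{-1}(\alpha/2)$, yielding
\[
P\!\left(T^{\pm} \geq G_1^{-1}(\alpha/2)\right) = G_1\!\left(G_1^{-1}(\alpha/2)\right) + o(1) = \alpha/2 + o(1).
\]
Summing the two tails gives $P(T \geq G_1^{-1}(\alpha/2)) \leq \alpha + o(1)$ as $n \to \infty$, which matches the meaning of $\lesssim \alpha$ in the statement.

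The main obstacle is, of course, Conjecture \ref{conj:statistic} itself — given the two Tracy-Widom marginals for $\hat{B}$, the false-alarm bound reduces to a one-line union bound with a continuous-CDF manipulation. A secondary, nonessential refinement would be to establish asymptotic independence of $\lambda_1(n^{-1/2}\hat{B})$ and $\lambda_n(n^{-1/2}\hat{B})$ (a property known for ordinary Wigner ensembles and expected to transfer once Conjecture \ref{conj:statistic} is proven); this would sharpen the bound from $\alpha + o(1)$ to $1 - (1-\alpha/2)^2 + o(1) = \alpha - \alpha^2/4 + o(1)$, giving an exact asymptotic level rather than an upper bound, but is not needed for the stated form of Conjecture \ref{conj:far}.
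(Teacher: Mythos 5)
Your proposal is correct and matches the paper's intended reasoning exactly: the paper derives Conjecture \ref{conj:far} from Conjecture \ref{conj:statistic} by Bonferroni correction over the two extreme-eigenvalue tails, which is precisely your union-bound argument combined with continuity of $F_1$ to pass from CDF convergence to tail convergence. Your additional remark on asymptotic independence of $\lambda_1$ and $\lambda_n$ is a nice, optional refinement that goes slightly beyond what the paper states but does not change the conclusion in the form given.
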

\begin{remark}
We note that $G_1^{-1}$ exists because $F_1$ is strictly monotonic: the $\TW_1$ distribution is absolutely continuous.
\end{remark}
In summary, rejecting the null hypothesis if $T \geq G^{-1}_1(\alpha/2)$ yields a type I error of at most $\alpha,$ for any $\alpha \in (0,1),$ at least asymptotically. 
The argument $\alpha/2$ of the quantile function $G^{-1}_1$ is due to Bonferonni correction, because we are simultaneously testing two eigenvalues.

\section{Empirical Results}\label{sec:empiricalresults}

In this section, we present empirical evidence supporting the assertion that the eigenvalues of $n^{-1/2} \cdot \hat{B}$ behave like those of a Wigner ensemble. For all simulations, we generate $\boldsymbol{\theta}$ via
\[
\theta_i =  \dfrac{X_i}{\sum_{j = 1}^n X_{j}}, \qquad \forall i \in [n],
\]
where  $X_j \sim \mathrm{Unif}[0.1, 0.9],$ for all  $j \in [n].$
For Figure \ref{fig:ROCs}, we set 
\[
W = 
D \cdot \begin{pmatrix} 0.4 & 0.2 & 0.2 \\0.2 & 0.6 & 0.3 \\0.2 &0.3 & 0.5 \end{pmatrix} \cdot D,
\]
where
\[
D = \diag \left ( \displaystyle \sum_{j \in \varphi{-1}(\{1\})}^n X_{j}, \displaystyle \sum_{j \in  \varphi{-1}(\{2\})}^n X_{j}, \displaystyle \sum_{j \in \varphi{-1}(\{3\})}^n X_{j} \right ),
\]
and
\[
\left \vert \varphi^{-1}(\{ \mu \}) \right \vert = \dfrac{n}{3}, \qquad \forall \mu \in [3].
\]
For Figures \ref{fig:semicirclelaw} and \ref{fig:TW1}, we set $w_{11} = \frac{1}{2} \sum_{j = 1}^n X_j.$

Figures \ref{fig:semicirclelaw} and \ref{fig:TW1} support that $n^{-1/2} \cdot \hat{B}$ has a spectrum close to that of a Wigner ensemble. In particular, Figure \ref{fig:TW1} indicates that the extreme eigenvalues of $n^{-1/2} \cdot \hat{B}$ converge to those of a Wigner ensemble at a rate faster than $n^{-2/3}.$ Figure \ref{fig:ROCs} offers a glimpse into the asymptotic power of the proposed test, indicating that it is asymptotically one for any positive significance level.

\begin{figure}[h]
\centering
\includegraphics[width=0.8\linewidth]{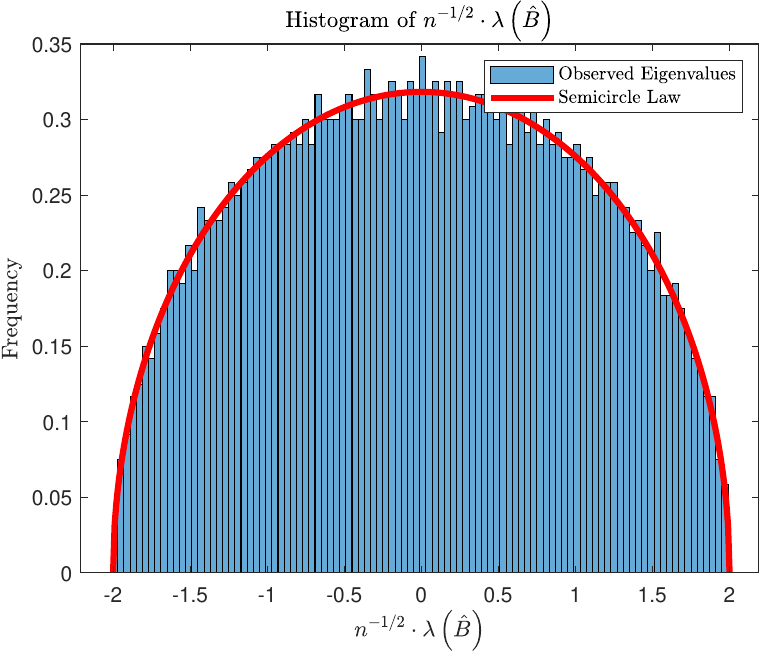}
\caption{A histogram of the eigenvalues of a single realization of $n^{-1/2} \cdot \hat{B}$ under the null hypothesis $k = 1,$ with $n = 3000.$ Overlaid on the histogram is the semicircle density $\rho_{sc}(x) = \dfrac{1}{2 \pi} \sqrt{4 - x^2} \cdot \mathbbm{1}_{[-2, \hspace{2pt} 2]}.$}
\label{fig:semicirclelaw}
\end{figure}

\begin{figure}[h]
\centering
\includegraphics[width=0.8\linewidth]{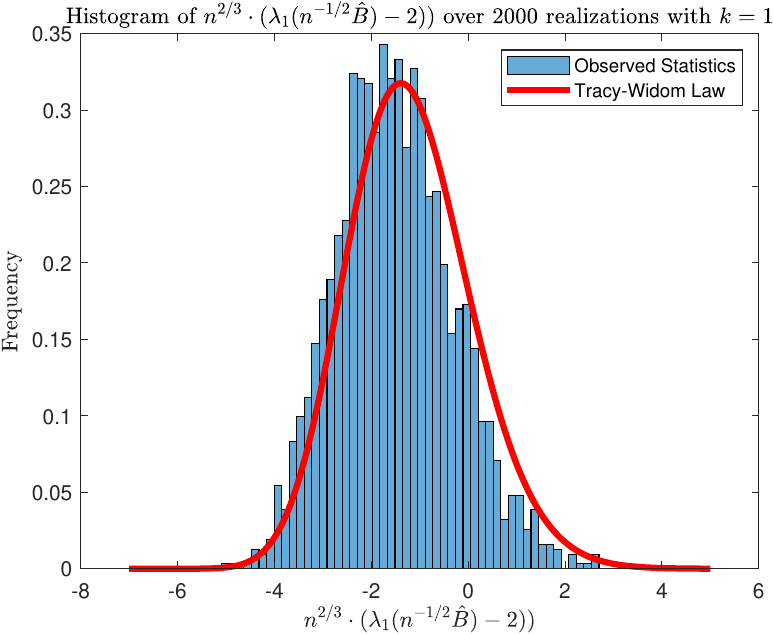}
\caption{A histogram of  $n^{2/3} \cdot  \left ( \lambda_{1} \left ( n^{-1/2} \cdot \hat{B} \right ) - 2 \right )$ over 2000 independent realizations with $n = 500$. Overlaid on the histogram is the $\TW_1$ density $\dfrac{\dd F_1}{\dd x},$ computed in software using \cite{chiani2014distribution}.}
\label{fig:TW1}
\end{figure}

\begin{figure}[h]
\centering
\includegraphics[width=0.8\linewidth]{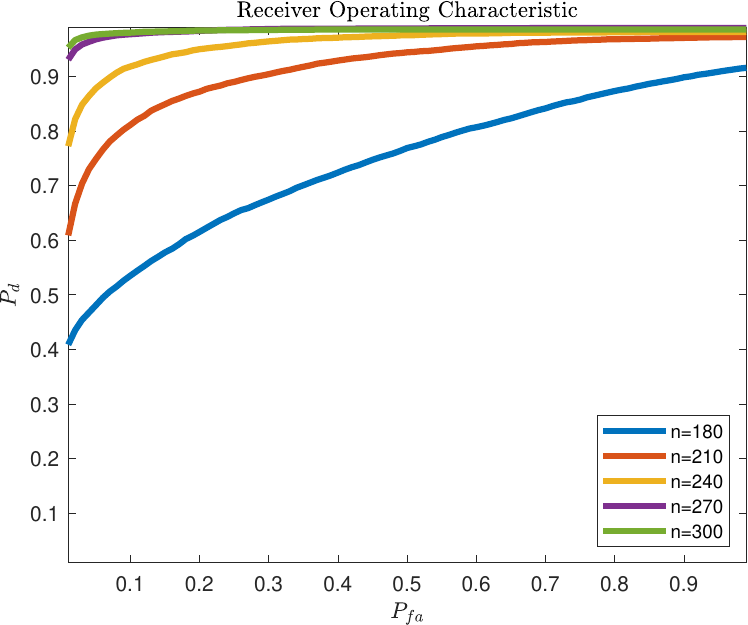}
\caption{Receiver operating characteristics for different values of $n$.}
\label{fig:ROCs}
\end{figure}

\section{Conclusion} \label{sec:conclusion}

In this paper, we proposed a test for detecting communities within a Degree-Corrected Stochastic Block Model. The test is based on the extreme eigenvalues of a an element-wise centered and rescaled adjacency matrix. Roughly, the proposed centering and rescaling are consistent with a transformation that maps the adjacency matrix $A$ to a Wigner ensemble. 
Because of this, we are able to approximate the distribution of the proposed statistic using the Tracy-Widom distribution: the asymptotic distribution of the extreme eigenvalues of a Wigner ensemble. Additionally, we provided a method for controlling the false alarm rate of the proposed test. Future work includes an analysis of the power of this test, which we believe converges as $n\rightarrow \infty$ to one for any positive significance level, and rigorous proofs of Conjectures \ref{conj:statistic} and $\ref{conj:far}.$

\section{Appendix}

\subsection{Notation and Conventions}\label{sec:notation}
Let $\NN$ be the set of positive integers. For $n \in \NN$, let $[n]:= \{1, \dots, n\}$. 
Let $\CC$ be the set of complex numbers, let $\RR$ be the set of real numbers, and define $\RR_+ \coloneqq (0, \infty].$ 
For a matrix $A \in \CC^{m \times n},$ we write $(A)_{ij}$ for the element in the $i^{th}$ row and $j^{th}$ column of $A$. 
Let $\bOne_n \coloneqq (1, 1, \ldots, 1) \in \RR^n.$
For $i, j \in [n],$ we set $\delta_{ij} = 1$ if $i = j$ and $\delta_{ij} = 0$ otherwise.
For $n \in \NN,$ we reserve the letter $I_n \in \CC^{n \times n}$ to denote the identity matrix throughout this paper.
For a diagonalizable matrix $A \in \CC^{n \times n}$, we write $\left (\lambda_i(A)\right )_{i = 1}^n$ for the multiset of eigenvalues of $A$ such that $\lambda_1(A) \geq \lambda_2(A) \geq \cdots \geq \lambda_n(A)$.
For a set $\cS \subseteq \CC$, we use the notation $A = (a_{ij})_{i, j = 1}^n \in \cS^{n \times n}$ to denote the matrix $A$ with elements $(A)_{ij} = a_{ij} \in \cS$ for all $i, j \in [n].$
For a vector $\bu = (u_1, \ldots, u_n) \in \CC^n,$ we let $\diag(\bu) \in \CC^{n \times n}$ denote the matrix with elements $\left ( \diag(\bu) \right )_{ij} = \delta_{ij} u_i.$ For a topological space $\cX$, we use $\cB(\cX)$ to denote the Borel $\sigma-$algebra generated by the open sets in $\cX$.

\subsection{Proof of Theorem \ref{thm:entrywise_estimator}}\label{sec:proof_main}

\subsubsection{Useful Lemmas}
\begin{theorem} [Hoeffding's inequality for bounded random variables {\cite[Theorem 2.2.6]{vershyninHDP}}] \label{thm:hoeffdingbounded}
Let $X_1, \ldots, X_n$ be independent random variables. Assume that $X_i \in [m_i, \hspace{1pt} M_i]$ for every $i \in [n].$ Then, for any $t >0,$ we have
\[
P \left ( \sum_{i = 1}^n \left ( X_i - \E X_i \right ) \geq t \right ) \leq \exp \left ( -\dfrac{2t^2}{\sum_{i = 1}^n (M_i - m_i)^2} \right ).
\]
\end{theorem}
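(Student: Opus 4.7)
The plan is to use the classical Chernoff-bounding argument combined with Hoeffding's lemma on centered bounded random variables. As a first step, for any $s > 0$, Markov's inequality applied to $\exp\!\left(s \sum_{i=1}^n (X_i - \E X_i)\right)$ gives
\[
P\!\left(\sum_{i=1}^n (X_i - \E X_i) \geq t\right) \leq e^{-st}\, \E \exp\!\left(s \sum_{i=1}^n (X_i - \E X_i)\right),
\]
and by independence of the $X_i$ the expectation factors as $\prod_{i=1}^n \E e^{s(X_i - \E X_i)}$. Thus the problem reduces to bounding the moment generating function of a single centered, bounded random variable.

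Second, I would establish Hoeffding's lemma: for any random variable $Y$ with $Y \in [a,b]$ almost surely and $\E Y = 0$, and for any $s \in \RR$, the bound $\E e^{sY} \leq \exp\!\left(s^2 (b-a)^2 / 8\right)$ holds. The argument proceeds by convexity: since $y \mapsto e^{sy}$ is convex on $[a,b]$,
\[
e^{sy} \leq \frac{b - y}{b-a}\, e^{sa} + \frac{y - a}{b-a}\, e^{sb},
\]
and taking expectations using $\E Y = 0$ yields $\E e^{sY} \leq \frac{b\, e^{sa} - a\, e^{sb}}{b - a}$. Parametrizing $p := -a/(b-a) \in [0,1]$ and $h := s(b-a)$ reduces the claim to $L(h) \leq h^2 / 8$, where $L(h) := -p h + \log(1 - p + p e^h)$. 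This in turn follows from $L(0) = L'(0) = 0$ together with the second-derivative estimate $L''(h) = u(h)(1 - u(h)) \leq 1/4$, where $u(h) := p e^h / (1 - p + p e^h) \in [0,1]$, and a second-order Taylor expansion with integral remainder.

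Applying Hoeffding's lemma to each centered summand $Y_i := X_i - \E X_i$, which lies in the interval $[m_i - \E X_i,\ M_i - \E X_i]$ of length $M_i - m_i$, gives
\[
P\!\left(\sum_{i=1}^n (X_i - \E X_i) \geq t\right) \leq \exp\!\left(-st + \frac{s^2}{8} \sum_{i=1}^n (M_i - m_i)^2\right).
\]
Setting $V := \sum_{i=1}^n (M_i - m_i)^2$ and minimizing the exponent over $s > 0$ yields the optimal choice $s^\star = 4t/V$, which when plugged back in produces the stated bound $\exp\!\left(-2 t^2 / V\right)$.

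The principal technical step is the second-derivative inequality $L''(h) \leq 1/4$ inside Hoeffding's lemma; this is where the constant $1/8$ (and hence the $2$ in the final exponent) is generated. Once that estimate is in hand, the remaining work is just the Chernoff moment-generating-function manipulation together with a one-variable calculus optimization, both of which are routine.
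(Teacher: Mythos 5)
Your proof is correct. Note that the paper does not prove this statement at all---it is quoted verbatim as an auxiliary result from Vershynin's textbook (Theorem 2.2.6 there) for use in the proof of Theorem \ref{thm:entrywise_estimator}---and your argument (Chernoff bounding via Markov's inequality on the moment generating function, Hoeffding's lemma established through convexity and the second-derivative bound $L''(h) = u(h)(1-u(h)) \leq 1/4$, then optimizing $s^\star = 4t/\sum_{i=1}^n (M_i - m_i)^2$) is precisely the standard proof given in that reference, with all constants checking out.
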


\begin{lemma} \label{lem:ratio_bound}
Let $a, b, c, d \in \RR$ such that $b \neq 0$, $b + d \neq 0,$ and $\left \vert \dfrac{a}{b} \right \vert  \leq 1.$ Then
\[
\left \vert \dfrac{a + c}{b + d} - \dfrac{a}{b} \right \vert \leq \dfrac{\left \vert c \right \vert + \left \vert d \right \vert}{\left \vert b + d \right \vert }.
\]
\end{lemma}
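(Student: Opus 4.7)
\textbf{Proof plan for Lemma \ref{lem:ratio_bound}.} The natural move is to decompose the difference $\frac{a+c}{b+d} - \frac{a}{b}$ into a piece that depends on the numerator perturbation $c$ and a piece that depends on the denominator perturbation $d$. Specifically, I would add and subtract $\frac{a}{b+d}$ to write
\[
\frac{a+c}{b+d} - \frac{a}{b} \;=\; \frac{c}{b+d} \;+\; a\left(\frac{1}{b+d}-\frac{1}{b}\right) \;=\; \frac{c}{b+d} \;-\; \frac{a\,d}{b(b+d)}.
\]
Each summand is now of the form ``small quantity over $b+d$'', which is exactly the denominator appearing on the right-hand side of the claimed inequality.

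Applying the triangle inequality and then pulling out the common denominator gives
\[
\left|\frac{a+c}{b+d} - \frac{a}{b}\right| \;\leq\; \frac{|c|}{|b+d|} \;+\; \frac{|a|\,|d|}{|b|\,|b+d|}.
\]
At this point the hypothesis $|a/b|\leq 1$ is used in the single place it is needed: it yields $|a|/|b| \leq 1$, hence $\frac{|a|\,|d|}{|b|\,|b+d|} \leq \frac{|d|}{|b+d|}$. Combining the two terms produces the stated bound $\frac{|c|+|d|}{|b+d|}$.

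Both the assumption $b \neq 0$ (so that $a/b$ and the intermediate term $\frac{a d}{b(b+d)}$ are well defined) and $b+d \neq 0$ (so that the fractions actually exist) are used exactly once in this derivation, and no other structural facts are needed. There is no genuine obstacle here; the only ``choice'' is which quantity to add and subtract, and the symmetric split above is the one that isolates the $|c|$ and $|d|$ contributions cleanly. An equally valid alternative would be to clear denominators and write $\frac{a+c}{b+d}-\frac{a}{b} = \frac{bc - ad}{b(b+d)}$, then bound $|bc-ad|\leq |b||c|+|a||d|\leq |b|(|c|+|d|)$ using $|a|\leq|b|$; this is essentially the same calculation rearranged.
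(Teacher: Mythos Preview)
Your proposal is correct and essentially identical to the paper's proof: the paper clears denominators to obtain $\frac{bc-ad}{b(b+d)}$, splits it as $\frac{c}{b+d}-\frac{a}{b}\cdot\frac{d}{b+d}$, applies the triangle inequality, and then uses $|a/b|\leq 1$. This is exactly the ``alternative'' computation you describe at the end, and your primary add-and-subtract decomposition yields the same two terms.
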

\begin{proof}
\begin{align*}
\left \vert \dfrac{a + c}{b + d} - \dfrac{a}{b} \right \vert & = \left \vert \dfrac{b c - a d}{b ( b + d ) } \right \vert \\
& \leq \left \vert \dfrac{c}{b + d} \right \vert + \left \vert \dfrac{a}{b} \right \vert \left \vert \dfrac{d}{b + d} \right \vert \\
& \leq  \dfrac{\left \vert c \right \vert + \left \vert d \right \vert}{\left \vert b + d \right \vert }.
\end{align*}
\end{proof}

\subsubsection{Proof of Main Result}

\begin{proof}
We seek to apply Lemma \ref{lem:ratio_bound} with $a = w^2 \theta_i \theta_j,$ $b = w,$ $c = (A \bOne \bOne^T A)_{ij}  - w^2 \theta_i \theta_j,$ and $d =  \bOne^T A \bOne - w$.
We begin by establishing fundamental bounds on the absolute ``errors'' $\left \vert (A \bOne \bOne^T A)_{ij}  - w^2 \theta_i \theta_j \right \vert $ and $\left \vert \bOne^T A \bOne - w \right \vert$. Then, we show that these error bounds are ``small'' via a concentration inequality for sums of independent random variables. Finally, we apply Lemma \ref{lem:ratio_bound} to yield the result.

Fix $i,j \in [n]$ and write $E_{ij} \equiv \left ( \dfrac{A \bOne \bOne^T A}{ \bOne^T A \bOne} \right )_{ij} - w \theta_i \theta_j.$ To this end, we begin with fundamental bounds on the ``error'' terms $\bOne^T A \bOne - w$ and $(A \bOne \bOne^T A)_{ij}  - w^2 \theta_i \theta_j$. We have
\[
(A \bOne \bOne^T A)_{ij} = \sum_{l = 1}^n a_{il} \sum_{m = 1}^n a_{mj},
\]
and
\begin{align*}
 \bOne^T A \bOne & = \sum_{p = 1}^n \sum_{q = 1}^n a_{pq} \\
& = 2 \sum_{p = 1}^n \sum_{q = p+1}^n a_{pq} + \sum_{r = 1}^n a_{rr}.
\end{align*}
Then,
\begin{align} 
    & \left \vert (A \bOne \bOne^T A)_{ij}  - w^2 \theta_i \theta_j \right \vert  = \left \vert \sum_{l = 1}^n a_{il} \sum_{m = 1}^n a_{mj} - w^2 \theta_i \theta_j \right \vert \nonumber \\
    & = \left \vert  \left ( \sum_{l = 1}^n a_{il}  - w \theta_i \right )  \left (\sum_{m = 1}^n a_{mj} - w \theta_j \right ) \right. \nonumber  \\
    & \left. - 2 w^2 \theta_i \theta_j + w \theta_j \sum_{l = 1}^n a_{il} + w \theta_i \sum_{m = 1}^n a_{mj} \right \vert \nonumber \\
    & \leq \left \vert \sum_{l = 1}^n a_{il}  - w \theta_i \right \vert  \left \vert \sum_{m = 1}^n a_{mj} - w \theta_j \right \vert \nonumber \\
    & +  w \theta_j \left \vert \sum_{l = 1}^n a_{il} - w \theta_i \right \vert + w \theta_i \left \vert \sum_{m = 1}^n a_{mj} - w \theta_j \right \vert, \label{eq:tri1}
\end{align}
and
\begin{align}\label{eq:tri2}
& \left \vert  \bOne^T A \bOne - \sum_{p = 1}^n \sum_{q = 1}^n w \theta_p \theta_q \right \vert \leq  \left \vert \sum_{r = 1}^n a_{rr} - \sum_{r = 1}^n \theta^2_r w \right \vert & \\
& + 2 \left \vert  \sum_{p = 1}^n \sum_{q = p+1}^n a_{pq} - \sum_{p = 1}^n \sum_{q = p+1}^n \theta_p \theta_q w \right \vert.
\end{align}
By Theorem \ref{thm:hoeffdingbounded} (Hoeffding's inequality), for any $t \geq 0$ we have
\begin{enumerate}[label=(\roman*)]
\item $P \left (  \left \vert \displaystyle \sum_{l = 1}^n a_{il} -  \theta_i  w \right \vert \geq t \right ) \leq \exp \left ( -\dfrac{2 t^2}{n} \right )$;
\item $P \left ( \left \vert \displaystyle \sum_{m = 1}^n a_{mj} -  \theta_j w \right \vert  \geq t \right ) \leq \exp \left ( -\dfrac{2 t^2}{n} \right )$;
\item \begin{align*} & P \left (  \left \vert \displaystyle \sum_{p = 1}^n \sum_{q = p+1}^n a_{pq} - \displaystyle \sum_{p = 1}^n \sum_{q = p+1}^n \theta_p \theta_q w \right \vert  \geq t \right ) \\
& \leq \exp \left (- \dfrac{2 t^2}{n^2 - n} \right ) \leq \exp \left (- \dfrac{2 t^2}{n^2} \right ); \end{align*}
\item $P \left (  \left \vert \displaystyle \sum_{r = 1}^n a_{rr} - \displaystyle  \sum_{r = 1}^n \theta^2_r w \right \vert  \geq t \right ) \leq \exp \left ( -\dfrac{2 t^2}{n} \right ),$
\end{enumerate}  
where in (i) and (ii) we used the normalization condition $\sum_{l = 1}^n \theta_l = 1$ (cf. Definition \ref{def:DCSBM}).
Combining \eqref{eq:tri1}, (i), and (ii), we find from union bound and DeMorgan's law that 
\begin{align} 
& P \left ( \left \vert (A \bOne \bOne^T A)_{ij}  - w^2 \theta_i \theta_j \right \vert \leq t^2 + w ( \theta_i + \theta_j ) t \right ) \nonumber \\
&\geq 1 - 2 \exp \left ( -\dfrac{2 t^2}{n} \right ), \qquad \forall t \geq 0. \label{eq:prob_bound_num}
\end{align}
Similarly, from (iii), (iv), and \eqref{eq:tri2}, it follows that
\begin{align} \label{eq:prob_bound_denom}
P \left ( \left \vert \bOne^T A \bOne -  w \right \vert \leq 3 t \right ) & \geq 1 - \left ( \exp \left (- \dfrac{2 t^2}{n^2} \right ) + \exp \left ( -\dfrac{2 t^2}{n} \right ) \right ) \nonumber \\
& \geq 1 -  2 \exp \left (- \dfrac{2 t^2}{n^2} \right ), \qquad \forall t \geq 0.
\end{align}
Note that $w = \sum_{p = 1}^n \sum_{q = 1}^n w \theta_p \theta_q \geq n^2 \epsilon,$ thus, \eqref{eq:prob_bound_denom} implies
\begin{align}
P \left ( \left \vert \bOne^T A \bOne -  w \right \vert \leq \dfrac{w}{2} \right ) & \geq 1 - 2 \exp \left (- \dfrac{w^2}{18n^2} \right ) \nonumber \\
& \geq 1 - 2 \exp \left (- \dfrac{n^2 \epsilon^2}{18} \right ). \label{eq:coarse_bound_w}
\end{align}
Assuming $\left \vert w - \bOne^T A \bOne \right \vert \leq \frac{w}{2}$ and, thus, $\bOne^T A \bOne > 0$, Lemma \ref{lem:ratio_bound}\footnote{with $a = w^2 \theta_i \theta_j,$ $b = w,$ $c = (A \bOne \bOne^T A)_{ij}  - w^2 \theta_i \theta_j,$ and $d =  \bOne^T A \bOne - w$} yields
\begin{align}
\left \vert  E_{ij} \right \vert  & \leq \dfrac{\left \vert (A \bOne \bOne^T A)_{ij} - w^2 \theta_i \theta_j \right \vert + \left \vert \bOne^T A \bOne - w \right \vert}{\bOne^T A \bOne} \nonumber \\
& = \dfrac{\left \vert (A \bOne \bOne^T A)_{ij} - w^2 \theta_i \theta_j \right \vert + \left \vert \bOne^T A \bOne - w \right \vert}{\left \vert w - ( w - \bOne^T A \bOne) \right \vert } \nonumber \\
& \leq \dfrac{\left \vert (A \bOne \bOne^T A)_{ij} - w^2 \theta_i \theta_j \right \vert + \left \vert \bOne^T A \bOne - w \right \vert}{\left \vert w - \left \vert w - \bOne^T A \bOne \right \vert  \right \vert } \nonumber \\
& \leq  2 \dfrac{\left \vert (A \bOne \bOne^T A)_{ij} - w^2 \theta_i \theta_j \right \vert + \left \vert \bOne^T A \bOne - w \right \vert}{\left \vert w  \right \vert } \nonumber\\
& \leq 2 \dfrac{\left \vert (A \bOne \bOne^T A)_{ij} - w^2 \theta_i \theta_j \right \vert + \left \vert \bOne^T A \bOne - w \right \vert}{n^2 \epsilon}. \label{eq:ratio_bound}
\end{align}
Noting that $w \theta_p = w \theta_p \sum_{q = 1}^n \theta_q \leq n (1 - \epsilon) $ for all $p \in [n]$ and letting $t>0$ be arbitrary, it follows that
\begin{align*}
& P \left ( \left \vert  E_{ij} \right \vert \leq 2 \dfrac{\sqrt{t} (2 n^{3/2} (1 - \epsilon)) +  4tn}{n^2 \epsilon} \right ) \\
& \geq P \left ( \left \vert E_{ij} \right \vert \leq 2 \dfrac{\left ( \sqrt{tn} \right )^2 + w (\theta_i + \theta_j) \sqrt{tn}  + 3tn}{n^2 \epsilon} \right ) \\
& \geq 1 - 2 \left ( 2 e^{-2 t^2} + e^{-\frac{n^2 \epsilon^2}{18}} \right ),
\end{align*}
where in the last step we combined \eqref{eq:prob_bound_num}, \eqref{eq:prob_bound_denom}, \eqref{eq:coarse_bound_w}, and \eqref{eq:ratio_bound}.
Finally, for any $t > 0$,
\begin{align*}
2 \dfrac{\sqrt{t} (2 n^{3/2} (1 - \epsilon)) +  4tn}{n^2 \epsilon} & = \dfrac{4 (1 - \epsilon) \sqrt{t}  n^{-1/2} + 8 t n^{-1}}{\epsilon} \\
& \leq \dfrac{8}{\epsilon} \left ( \sqrt{t} n^{-1/2} + t n^{-1} \right ).
\end{align*}

\end{proof}

\bibliographystyle{plain}
\bibliography{asilomar}	

\end{document}